\documentclass[12pt,a4paper]{article}
\topmargin 0in
\headheight 0in
\headsep 0in
\textheight 8.75in
\textwidth 6in
\oddsidemargin 0in
\evensidemargin 0in
\headsep .5in

\input xy
\xyoption{all}
\usepackage{amsthm}
\usepackage{amsfonts,amsmath}
\usepackage{lscape}


\def\tp{\otimes} 

\def\d{\delta}

\def\L{\mathcal{L}}
\def\ve{\varepsilon}
\def\s{\sigma}
\def\t{\tau}

\newtheorem{thm}{Theorem}[section]
\newtheorem{prop}[thm]{Proposition}

\theoremstyle{definition}

\theoremstyle{remark}

\title{Bethe ansatz solution of an integrable, non-Abelian anyon chain with $D(D_3)$ symmetry}
\author{
C.W. Campbell, K.A. Dancer, P.S. Isaac and J. Links \\
Centre for Mathematical Physics, \\
School of Physical Sciences,  \\
The University of Queensland, 4072,\\
Australia.
}

\begin{document}
\maketitle

\begin{abstract}
\noindent The exact solution  for the energy spectrum of a one-dimensional Hamiltonian with local two-site interactions and periodic boundary conditions is determined. The two-site Hamiltonians commute with the symmetry algebra given by the Drinfeld double $D(D_3)$ of the dihedral group $D_3$.  As such the model describes local interactions between non-Abelian anyons, with fusion rules given by the tensor product decompositions of the irreducible representations of $D(D_3)$. The Bethe ansatz equations which characterise the exact solution are found through the use of functional relations satisfied by a set of mutually commuting transfer matrices. 
\end{abstract}

\section{Introduction}

Current interest in systems of non-Abelian anyons and associated topological phases is primarily motivated by applications to topological quantum computing \cite{kitaev2003,nssfd2008}. There are indications that non-Abelian excitations occur in  fractional quantum Hall systems with filling fraction 5/2 \cite{g07,s07}. Fractional quantum Hall systems are necessarily chiral in nature, breaking time-reversal symmetry due to the presence of a strong magnetic field.  An early example of an anyonic model with chiral edge modes was given by Kitaev \cite{k06}. There are also expectations that topological phases may   
occur in time-reversal invariant systems \cite{m09,s09}, and models have been developed in this context \cite{lw05,f08,gtkltw09}.
 
To help facilitate a deeper understanding of non-Abelian anyonic systems, several groups have undertaken studies of models of interacting non-Abelian anyons on a one-dimensional lattice \cite{ftltkwf2007,by07,ttwl08,frbm08,a1,a2,fltr09}. Here we continue research in the direction of interacting systems by developing the Bethe ansatz solution for a chiral, integrable chain which was introduced in \cite{DIL2006}. The Hamiltonian of this model was derived using the algebraic structure given by the Drinfeld double \cite{d1986} $D(D_3)$ of the dihedral group $D_3$. Consistent braiding and fusion relations for non-Abelian anyonic models are found in the category of irreducible 
representations of $D(D_3)$, since $D(D_3)$ is an example of a quasi-triangular Hopf algebra 
\cite{kitaev2003,d1986}. Associated with a (non-trivial) quasi-triangular Hopf algebra are braiding properties for anyonic degrees of freedom, characterised by solutions of the Yang-Baxter equation {\it without} spectral parameter which are realised through representations of the universal $R$-matrix of the algebra. The anyonic fusion rules are given by  decompositions of tensor product representations of the Hopf algebra, which are governed by the coproduct structure. These fusion rules provide a means to construct interacting systems by assigning energies to the various possible multiplet structures \cite{ttwl08}. In \cite{DIL2006} an integrable Hamiltonian,  comprised of two-site interactions which commute with the action of $D(D_3)$, was derived.  The Hamiltonian was constructed by means of the Quantum Inverse Scattering Method (QISM) \cite{STF1979} through a solution of the Yang-Baxter equation {\it with} spectral parameter. However the algebraic Bethe ansatz approach to derive the exact solution, which often accompanies the QISM, is problematic for this model due to the apparent lack of a pseudo-vacuum state.

Here we determine the Bethe ansatz solution for this model by the method of functional relations satisfied by a set of mutually commuting transfer matrices, following techniques developed in \cite{br89,bbp90,nep03}. While our results have close connection to some others in the literature, we also find differences. It can be shown \cite{bp09,fdil09} that the $D(D_3)$-invariant solution of the Yang-Baxter equation found in \cite{DIL2006} can be obtained as a limiting case of the checkerboard three-state self-dual Potts model, or equivalently the checkerboard three-state Fateev-Zamolodchikov model \cite{fz82}. However the one-dimensional Hamiltonian derived from this particular limiting case does not belong to the class of chiral Potts chain models derived in \cite{bpa88}, this latter class arising from the choice of uniform rapidities. The non-chiral limit of the chiral Potts chain models gives the Fateev-Zamolodchikov spin models,  the Bethe ansatz solution for which was derived in \cite{a92} using functional relations given in \cite{bbp90}. The energy eigenvalues of these Hamiltonians are functions of a set of roots of a set of Bethe ansatz equations. What we will show below is that the energy eigenvalues of the non-Abelian anyonic chain we study are generically functions of two sets of roots of a single set of Bethe ansatz equations. This situation is reminiscent of the Bethe ansatz results for particular random tiling models \cite{k94,dgn97}.

\section{Preliminaries}

The use of Drinfeld doubles of finite group algebras for describing anyons on a lattice was introduced in \cite{kitaev2003}. Here we recall some relevant facts pertaining to our investigations. 

\vfil\eject

The dihedral group $D_3$ has two generators $\sigma, \tau$ satisfying:

$$ \sigma^3 = e,\; \tau^2 = e,\; \tau \sigma = \sigma^{2} \tau $$

\noindent where $e$ denotes the identity.  The Drinfeld double \cite{d1986} of $D_3$, denoted $D(D_3)$, has basis $$\{ gh^*| g,h \in D_3\},$$ where $g$ are the group elements and $g^*$ are dual elements. This gives an algebra of dimension 36. Multiplication of dual elements is given by 

\begin{eqnarray}
g^* h^* = \delta(g,h) g^*
\label{rels1}
\end{eqnarray}

\noindent where $\d$ is the Kronecker delta function.  The products $h^*g$ are computed using

\begin{eqnarray}
h^*g = g(g^{-1} h g)^*.
\label{rels2}
\end{eqnarray}

The algebra $D(D_3)$ becomes a Hopf algebra by imposing the following coproduct, antipode and counit respectively:

\begin{align*}
&\Delta(gh^*) = \sum_{k \in G} g (k^{-1}h)^* \otimes g k^* = \sum_{k \in G} gk^* \otimes g(h k^{-1})^*, \\
&S(gh^*) = (h^{-1})^*g^{-1} = g^{-1} (g h^{-1}g^{-1})^*, \\
&\ve(gh^*)= \d(h,e),\qquad \qquad\qquad \qquad \qquad   \forall g,h \in D_3.
\end{align*}
The universal $R$-matrix is given by 

$$ \mathcal{R} = \sum_{g \in D_3} g \otimes g^* .$$

\noindent This can easily be shown to satisfy the defining relations for a quasi-triangular Hopf algebra:
\begin{eqnarray}
\mathcal{R}{\Delta}(a)&=&{\Delta}^T(a)\mathcal{R}, \quad \forall\,a\in D(G), \label{qt1}\\
({\Delta}\otimes {\rm id}) \mathcal{R}&=&\mathcal{R}_{13}\mathcal{R}_{23},   \label{qt2}\\
({\rm id}\otimes {\Delta})\mathcal{R}&=& \mathcal{R}_{13}\mathcal{R}_{12}, \label{qt3}
\end{eqnarray}  
where ${\Delta}^T$ is the opposite coproduct
$$ {\Delta}^T(gh^*) = \sum_{k \in G} gk^* \otimes g (k^{-1}h)^* = \sum_{k \in G}
g(kh^{-1})^* \otimes   gk^*. $$

 The Hopf algebra $D(D_3)$ has two 1-dimensional irreducible representations (irreps), four 2-dimensional irreps and two 3-dimensional irreps.  Letting $E^i_j$ denote the matrix with 1 in the $(i,j)$ position and zeros elsewhere, the irreps are given by \cite{DIL2006}:

\begin{itemize}
\item[{}] \underline{1-dimensional irreps} \\

$$\pi_{(1,\pm)} = 1, \quad \pi_{(1,\pm)}(\t) = \pm 1, \quad \pi_{(1,\pm)}(g^*) = \delta(g,e).$$
The representation $\pi_{(1,+)}$ is known as the trivial representation and is associated with the $D(D_3)$ action on the vacuum. It can equivalently be expressed as 
\begin{eqnarray*}
\pi_{(1,+)}(a) = \ve(a) \qquad \forall \,a\in D(D_3).
\end{eqnarray*}

\item[{}] \underline{2-dimensional irreps} \\
Set $\omega$ to be the cube root of unity $\omega = e^{2\pi i/3}$. Then 

$$\pi_{(2,e)}(\s) = \begin{pmatrix}
                    \omega & 0  \\ 0 & \omega^{-1}
                 \end{pmatrix}, \quad
\pi_{(2,e)}(\tau) = \begin{pmatrix}
                 0 & 1 \\ 1 & 0 
               \end{pmatrix}, \quad 
\pi_{(2,e)}(g^*) = \delta(g, e) I_2$$

\noindent and 

$$\pi_{(2,i)}(\s) = \begin{pmatrix}
                    \omega^i & 0  \\ 0 & \omega^{-i}
                 \end{pmatrix}, \quad
\pi_{(2,i)}(\tau) = \begin{pmatrix}
                 0 & 1 \\ 1 & 0 
               \end{pmatrix}, \quad 
\pi_{(2,i)}(g^*) =  \delta(g, \s)E^1_{1}+ \delta(g,\s^{-1}) E^2_{2}$$

\noindent for $i = 0,1,2.$

\item[{}] \underline{3-dimensional irreps} \\

$$\pi_{(3,\pm)}(\s) = \begin{pmatrix}
           0 & 1 & 0 \\ 0 & 0 & 1 \\ 1 & 0 & 0 
	 \end{pmatrix}, \quad
\pi_{(3,\pm)}(\t) = \pm \begin{pmatrix}
             1 & 0 & 0 \\ 0 & 0 & 1 \\ 0 & 1 & 0
	   \end{pmatrix}, $$
$$\pi_{(3,\pm)}((\s^i)^*)=0, \quad \pi_{(3,\pm)}((\s^i \t)^*)=E^{i+1}_{i+1},\quad i =0,1,2.$$
\end{itemize}
Throughout we denote the module associated with an irrep $\pi_{(a,b)}$ by $V_{(a,b)}$.

Tensor product representations of $D(D_3)$ can be constructed through the coproduct action $\Delta$.  An important aspect in the definition of a Hopf algebra is that the coproduct 
is coassociative, meaning $(\rm{ id} \otimes \Delta)\Delta =(\Delta \otimes \rm{ id})\Delta$. This allows for a consistently defined 
action of $D(D_3)$ on any $\L$-fold tensor product space. 
All representations of $D(D_3)$ are completely reducible \cite{gould93}. 
It can be shown that the $\L$-fold tensor product representation of $ \pi_{(3,+)} $ decomposes as 
\begin{eqnarray*}
\pi_{(3,+)}^{\otimes \L} =\begin{cases} \frac{1}{2}\left(3^{\L-1} +1 \right)\pi_{(3,+)} 
\oplus \frac{1}{2}\left(3^{\L-1} -1 \right)\pi_{(3,-)} ,  \qquad\qquad\qquad\quad \L\,\,{\rm odd},  \\
3^{\L -2}\left( \pi_{(2,e)} \oplus \pi_{(2,0)} \oplus \pi_{(2,1)} \oplus \pi_{(2,2)}    \right) \\ 
\qquad\qquad \oplus\frac{1}{2}\left(3^{\L-2} +1 \right)\pi_{(1,+)} 
\oplus \frac{1}{2}\left(3^{\L-2} -1 \right)\pi_{(1,-)} ,  \qquad \L\,\,{\rm even}
\end{cases}
\end{eqnarray*}
where the co-efficients above denote the multiplicities. For the model we will examine below, the module $V_{(3,+)}$ will be used for the local Hilbert space for the anyonic degrees of freedom. The global Hilbert space $W$ is given by the $\L$-fold tensor product
\begin{eqnarray}
W=V_{(3,+)}^{\otimes \L} . 
\label{tpm}
\end{eqnarray}
However for anyonic systems, it is necessary to distinguish the {\it physical} space of states $\mathcal{W}\subset W$ which is governed by a superselection rule \cite{k06}. Here we take the following superselection rule to define $\mathcal{W}$: 
\begin{eqnarray}
\pi_{(3,+)}^{\otimes \L}(a){{w}}=\ve(a) {w} \qquad \,\forall \,{w}\in\mathcal{W},\quad a\in D(D_3),
\label{triv}
\end{eqnarray}   
which is equivalent to demanding that the system of $D(D_3)$ anyons with local states represented by $V_{(3,+)}$ are quasi particles produced from the vacuum. Mathematically, 
$\mathcal{W}$ is the $(3^{\L-2}+1)/2$-dimensional space of trivial modules contained in the tensor product module (\ref{tpm}) when $\L$ is even. When $\L$ is odd  $\mathcal{W}$ is zero-dimensional.

Finally we make the following important technical observation. Let $\{v_1,\,v_2,\,v_3\}$ denote the basis for $V_{(3,+)}$ with respect to the action of the $E^i_j$ defined above, i.e. $E^i_j v^k=\delta^k_j v^i$. Then the space $V_{(1,+)}\subset V_{(3,+)} \otimes V_{(3,+)}$ is spanned by the vector 
\begin{eqnarray}
s= \sum_{k=1}^3 v^k \otimes v^k.
\label{s}
\end{eqnarray}  
Note that $s$ is symmetric even though the coproduct action $\Delta$  is not cocommutative, i.e. $\Delta\neq\Delta^T$.

\section{The Hamiltonian and integrability}

We will study an integrable Hamiltonian, acting on $W$, which reads \cite{DIL2006}
\begin{eqnarray}
H=\sum_{k=1}^{\L-1} h_{k(k+1)} + h_{\L 1}
\label{Ham}
\end{eqnarray}
where 
\begin{eqnarray}
 h=\sum_{\gamma \in D_3}i (E^{\gamma(1)}_{\gamma(2)} \tp E^{\gamma(2)}_{\gamma(3)} 
	- E^{\gamma(2)}_{\gamma(3)} \tp E^{\gamma(1)}_{\gamma(2)} )
\label{ham}
	\end{eqnarray}
and the elements $\gamma\in D_3$ are written as permutations of $\{1,2,3\}$. (Recall that $D_3$ is isomorphic to the permutation group $S_3$.) Since 
$D(D_3)$ is not cocommutative, the inclusion of the periodic boundary interaction $h_{\L 1}$ breaks the global $D(D_3)$ invariance of $H$ down to $D_3$. Nonetheless, we will show below that the physical space of states $\mathcal{W}$ is stable under the action of $H$, and that the global $D(D_3)$ invariance is preserved when we restrict the action of $H$ to only physical states.   

We begin by discussing some simple properties of the Hamiltonian, which is first of all seen to be hermitian.  Defining the permutation operator
$P \in \mbox{End }( V_{(3,+)}\tp V_{(3,+)})$ through
$$ P(v \tp w) = w \tp v, \hspace{1cm} v,w \in V_{(3,+)} $$
it can be checked that (\ref{Ham}) commutes with the translation operator 
$$\mathcal{T}=P_{1\L}P_{1\L-1}...P_{12}=\exp(i\mathcal{P})$$
where $\mathcal{P}$ is the momentum operator.
The chiral nature of (\ref{Ham}) can be seen through the space inversion of sites $i\mapsto \L+1-i$, which induces the maps $\mathcal{P}\mapsto -\mathcal{P}$ and $H \mapsto -H$. That the Hamiltonian is not invariant with respect to space inversion is a signature of chirality.  It is also apparent that the Hamiltonian is not time-reversal invariant as $H^*=-H$, where 
throughout we will use $*$ to denote complex conjugation. Given any eigenstate $|\Psi\rangle$, the time-reversed state $|\Psi\rangle^*$ has energy and momentum which take the negative values of those for $|\Psi\rangle$. As a result we can conclude that both the momentum {\it{and}} energy spectra are symmetrically distributed around zero.  

By construction, each of the local Hamiltonians $h_{i(i+1)}$ for $i=1,...,\L-1$ commutes with the action $D(D_3)$ \cite{DIL2006}. So for any $a\in D(D_3)$ and $ {w} \in{\mathcal{W}}$ we have 
\begin{eqnarray*}
\pi^{\otimes \L}_{(3,+)}(a) h_{i(i+1)} {w} &=& h_{i(i+1)} \pi^{\otimes \L}_{(3,+)}(a) {w} \\
&=&  h_{i(i+1)} \ve(a) {w} \\
&=& \ve(a) h_{i(i+1)} {w}
\end{eqnarray*} 
which shows that $\mathcal{W}$ is stable under the action of  $h_{i(i+1)}$ for $i=1,...,\L-1$. It remains to consider the action of $h_{\L 1}$. We can write any ${w}\in \mathcal{W}$ as 
\begin{eqnarray*}
{w} &=&\sum_{k=1}^3 v^k \otimes u^k \\
&=&\sum_{k=1}^3  t^k \otimes v^k  
\end{eqnarray*}
where, as a consequence of (\ref{s}),  $\{u^k\}_{k=1}^3\subset \left(V_{(3,+)}\right)^{\otimes \L-1}$ and $\{t^k\}_{k=1}^3\subset \left(V_{(3,+)}\right)^{\otimes \L-1}$ are both isomorphic to $\{v^k\}_{k=1}^3\subset V_{(3,+)}$ as a basis for the module associated with the representation $\pi_{(3,+)}$. Using this result we have 
\begin{eqnarray*}
\mathcal{T}{w} = \sum_{k=1}^3 v^k \otimes t^k 
\end{eqnarray*}
which again by comparison with (\ref{s}) shows that $\mathcal{T}{w}\in\mathcal{W}$, and necessarily $\mathcal{T}^{-1}{w}\in\mathcal{W}$. 
Now we have 
\begin{eqnarray*}
h_{\L 1} {w}  &=& \mathcal{T}^{-1} \mathcal{T} h_{\L 1} {w} \\
&=& \mathcal{T}^{-1} h_{12}  \mathcal{T} {w}. 
\end{eqnarray*}
Since $\mathcal{W}$ is stable under the actions of $h_{12},\,\mathcal{T}$ and $\mathcal{T}^{-1}$, this establishes that $\mathcal{W}$ is stable under the action of $h_{\L 1}$. Consequently $\mathcal{W}$ is stable under the action of $H$. As the action of $D(D_3)$ is trivial on $\mathcal{W}$, through the superselection rule (\ref{triv}), we have that the action of $H$ restricted to $\mathcal{W}$ is both stable and $D(D_3)$-invariant.

Next we recall the basic results concerning the integrability of (\ref{Ham}).
We consider an invertible operator $R(x) \in \mbox{End }( V_{(3,+)}\tp V_{(3,+)})$ which satisfies the Yang--Baxter equation in $\mbox{End }(V_{(3,+)} \tp V_{(3,+)} \tp V_{(3,+)})$:
\begin{equation}
 R_{12}(x/y) R_{13}(x) R_{23}(y) = R_{23}(y) R_{13}(x) R_{12}(x/y).
\label{ybe}
\end{equation}
The argument $x$ of $R(x)$ is called the spectral parameter. The explicit solution related to (\ref{Ham}) is \cite{DIL2006}
 \begin{equation}  R(x) = 
  \left( \begin {array}{ccccccccc}
  a(x)&0&0&0&0&0&0&0&0  \\
  0&0& b(x) & c(x)&0&0&0&d(x)&0 \\
  0& b(x) &0&0&0&d(x)& c(x) &0&0 \\
  0& c(x) &0&0&0& b(x) &d(x)&0&0 \\
  0&0&0&0&a(x)&0&0&0&0 \\
  0&0&d(x)& b(x) &0&0&0&c(x)&0 \\
  0&0&c(x)&d(x)&0&0&0& b(x) &0 \\
  0&d(x)&0&0&0& c(x) &b(x) &0&0 \\
  0&0&0&0&0&0&0&0&a(x)
\end {array} \right) 
\label{rm}
\end{equation} 
where 
\begin{eqnarray*}
a(x)&=& x^2-x+1,\\
b(x)&=& x^2-1,\\
c(x)&=& x, \\
d(x)&=&1-x.
\end{eqnarray*}
From this solution we construct the transfer matrix 
\begin{eqnarray}
T(x)={\rm tr}_0(R_{0\L}(x)R_{0(\L-1)}(x)...R_{02}(x)R_{01}(x))
\label{tm}
\end{eqnarray}
which as a result of (\ref{ybe}) forms a commuting family for different values of the spectral parameters:
\begin{eqnarray*}
\left[T(x),\,T(y)\right]=0\qquad\forall x,y\in\mathbb{C}. 
\end{eqnarray*}
Note that $T(1)=\mathcal{T}$. The Hamiltonian is defined by 
\begin{eqnarray} 
H=\left.iT^{-1}(x)\frac{d}{dx}\left[x^{-\L}T(x)\right]\right|_{x=1}.
\label{logderiv}
\end{eqnarray}
This definition give rise to the global Hamiltonian (\ref{Ham}) as the sum of local two-site Hamiltonians (\ref{ham}) where
\begin{eqnarray*}
h=\left.i\frac{d}{dx}\frac{PR(x)}{x}\right|_{x=1}.
\end{eqnarray*} 
By construction the global Hamiltonian satisfies 
\begin{eqnarray*}
\left[H,\,T(x)\right]=0.
\end{eqnarray*}
As a result, $T(x)$ can be used as a generating function for conserved operators of $H$.  

In order to determine the Bethe ansatz solution of the Hamiltonian, it is necessary to consider an expanded set of generating functions for the conserved operators. We introduce the operators
\begin{eqnarray*}
\overline{R}(x)&=&(x^2+x+1)\left[R^{-1}(-x)\right]^{t_2} , \\
\overline{\overline{R}}(x)&=&\frac{(x^2+x+1) (x-1)^2}{ (x^2-x+1) }\left[\overline{R}^{-1}(-x)\right]^{t_2}
\end{eqnarray*} 
where $t_2$ denotes partial matrix transposition in the second space. We also note 
$$R(x)=(1-x)^2(x^2+x+1)\left[\overline{\overline{R}}^{-1}(-x)\right]^{t_2}. $$ 
It follows from (\ref{ybe}) that
\begin{eqnarray}
R_{12}(x/y) \overline{R}_{13}(x) \overline{R}_{23}(y)& =& \overline{R}_{23}(y) \overline{R}_{13}(x) R_{12}(x/y), 
\label{ybe1}\\
R_{12}(x/y) \overline{\overline{R}}_{13}(x) \overline{\overline{R}}_{23}(y)& =& \overline{\overline{R}}_{23}(y) \overline{\overline{R}}_{13}(x) R_{12}(x/y), 
\label{ybe2}\\
\overline{R}_{12}(x/y) \overline{\overline{R}}_{13}(x) \overline{R}_{23}(y)& =& \overline{R}_{23}(y) \overline{\overline{R}}_{13}(x) \overline{R}_{12}(x/y), 
\label{ybe3}\\
\overline{R}_{12}(x/y) {{R}}_{13}(x) \overline{\overline{R}}_{23}(y)& =& \overline{\overline{R}}_{23}(y) {{R}}_{13}(x) \overline{R}_{12}(x/y),
\label{ybe4} \\
\overline{R}_{12}(x/y) {\overline{R}}_{13}(x) {{R}}_{23}(y)& =& {{R}}_{23}(-y) {\overline{R}}_{13}(x) \overline{R}_{12}(x/y).
\label{ybe5}
\end{eqnarray}
Explicitly we have 
\begin{eqnarray*}  \overline{R}(x) &=& 
  \left( \begin {array}{ccccccccc}
  a(x)&0&0&0&c(x)&0&0&0&c(x)  \\
  0&0& b(x) & 0&0&0&0&d(x)&0 \\
  0& b(x) &0&0&0&d(x)& 0 &0&0 \\
  0& 0 &0&0&0& b(x) &d(x)&0&0 \\
  c(x)&0&0&0&a(x)&0&0&0&c(x) \\
  0&0&d(x)& b(x) &0&0&0&0&0 \\
  0&0&0&d(x)&0&0&0& b(x) &0 \\
  0&d(x)&0&0&0& 0 &b(x) &0&0 \\
  c(x)&0&0&0&c(x)&0&0&0&a(x)
\end {array} \right), \\
\overline{\overline{R}}(x) &=& 
  \left( \begin {array}{ccccccccc}
  {x}^{2}+1&0&0&0&x&0&0&0&x \\
  0&0&x^2&-x&0&0&0&1&0 \\
  0&x^2&0&0&0&1&-x&0&0 \\
  0&-x&0&0&0&x^2&1&0&0 \\
  x&0&0&0&{x}^{2}+1&0&0&0&x \\
  0&0&1&x^2&0&0&0&-x&0 \\
  0&0&-x&1&0&0&0&x^2&0 \\
  0&1&0&0&0&-x&x^2&0&0 \\
  x&0&0&0&x&0&0&0&{x}^{2}+1
\end {array} \right). 
\end{eqnarray*} 

Next we introduce the $L$-operator $L(x)\in \mbox{End }( V_{(2,1)}\tp V_{(3,+)})$ which was constructed in \cite{fdil09,dl09} and satisfies the equation
\begin{eqnarray}
L_{12}(x/y) L_{13}(x) R_{23}(y) = R_{23}(y) L_{13}(x) L_{12}(x/y)
\label{llr}
\end{eqnarray}
in $\mbox{End }(V_{(2,1)} \tp V_{(3,+)}\tp V_{(3,+)})$. 
It is convenient to express the solution for the $L$-operator as
\begin{eqnarray*}
L(x)= \left(
\begin{array}{cc} L^1_1(x) & L^1_2(x) \cr 
L^2_1(x) & L^2_2(x) 
\end{array}
\right) 
\end{eqnarray*} 
where
\begin{eqnarray*}
L^1_1(x)&=&xX, \qquad
L^1_2(x)=Z, \\
L^2_1(x)&=&Z^{-1}, \qquad
L^2_2(x)=xX^{-1} 
\end{eqnarray*}
and
\begin{eqnarray*}
X=\left(
\begin{array}{ccc}
0 & 0 & 1 \\
1 & 0 & 0 \\
0 & 1 & 0
\end{array}   
\right), 
\qquad
Z=\left( 
\begin{array}{ccc}
1 & 0 & 0 \\
0 & \omega & 0 \\
0 & 0 & \omega^{-1}
\end{array}
\right).
\end{eqnarray*}
Using $XZ=\omega^{-1} ZX$
we find the inverse to be given by 
\begin{eqnarray}
L^{-1}(x)=\frac{1}{1-\omega^{-1} x^2} \left(
\begin{array}{cc} L^2_2(-\omega^{-1} x) & L^1_2(-\omega^{-1} x) \cr 
L^2_1(-\omega x) & L^2_2(-\omega^{-1} x) 
\end{array}
\right) 
\label{linv}
\end{eqnarray} 
and moreover 
\begin{eqnarray}
[L^{-1}(x)]^{t_2}=\frac{1}{1-\omega^{-1}x^2}L(-\omega^{-1}x).
\label{cross}
\end{eqnarray}
From (\ref{cross}) we obtain
\begin{eqnarray}
L_{12}(x/y) L_{13}(\omega^{-1}x) \overline{R}_{23}(y) 
&= &\overline{R}_{23}(y) L_{13}(\omega^{-1}x) L_{12}(x/y),
\label{llr1} \\
L_{12}(x/y) L_{13}(\omega x) \overline{\overline{R}}_{23}(y) 
&= &\overline{\overline{R}}_{23}(y) L_{13}(\omega x) L_{12}(x/y).
\label{llr2}
\end{eqnarray}
The $L$-operator also satisfies \cite{fdil09,dl09}
\begin{eqnarray}
r_{12}(x/y) L_{13}(x) L_{23}(y) = L_{23}(y) L_{13}(x) r_{12}(x/y)
\label{rrl}
\end{eqnarray}
in $\mbox{End }(V_{(2,1)} \tp V_{(2,1)}\tp V_{(3,+)})$
where $r(x)\in \mbox{End }( V_{(2,1)}\tp V_{(2,1)})$ is a specialisation of the symmetric six-vertex solution:
\begin{eqnarray*}
r(x)=\left( 
\begin{array}{cccc}
\omega x- \omega^{-1}x^{-1} & 0 & 0 & 0 \\
0 & x- x^{-1} & \omega - \omega^{-1} & 0 \\
0 & \omega - \omega^{-1} &  x- x^{-1} & 0 \\
0 & 0 & 0 & \omega x- \omega^{-1}x^{-1} 
\end{array}
\right).
\end{eqnarray*}

Now we construct the additional transfer matrices 
\begin{eqnarray}
t(x)&=&{\rm tr}_0(L_{0\L}(x)L_{0(\L-1)}(x)...L_{02}(x)L_{01}(x)), 
\label{tm1}\\
\overline{T}(x)&=&{\rm tr}_0(\overline{R}_{0\L}(x)\overline{R}_{0(\L-1)}(x)
...\overline{R}_{02}(x)\overline{R}_{01}(x)), 
\label{tm2} \\
\overline{\overline{T}}(x)&=&{\rm tr}_0(\overline{\overline{R}}_{0\L}(x)\overline{\overline{R}}_{0(\L-1)}(x)
...\overline{\overline{R}}_{02}(x)\overline{\overline{R}_{01}}(x)).
\label{tm3}
\end{eqnarray}

As a result of equations 
(\ref{ybe},\ref{ybe1},\ref{ybe2},\ref{ybe3},\ref{ybe4},\ref{ybe5},\ref{llr},\ref{llr1},\ref{llr2},\ref{rrl}) it follows that the set of transfer matrices (\ref{tm},\ref{tm1},\ref{tm2},\ref{tm3}) are mutually commuting for all values of the spectral parameters. This means that this set of transfer matrices can be simultaneously diagonalised and their mutual eigenvectors will be independent of the spectral parameters. Note that $T(x),\overline{T}(x),\overline{\overline{T}}(x)$
are all matrices of polynomials with real co-efficients, so their eigenvalues will either be real polynomials or arise as complex-conjugate pair polynomials for $x\in\mathbb{R}$.  
To conclude this section we prove some similarly useful properties for $t(x)$: 
\begin{prop} \label{prop}
 The transfer matrix $t(x)$ satisfies $t(-x)=(-1)^\L t(x)$, and is self-adjoint for $x\in\mathbb{R}$.
\end{prop}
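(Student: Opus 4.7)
My approach is to derive both statements by lifting symmetries of the single-site $L$-operator to the full monodromy $M(x)=L_{0\L}(x)\cdots L_{01}(x)$ and exploiting cyclicity of the auxiliary trace $\mathrm{tr}_{0}$. For the parity identity I would first verify directly from the block form of $L$ that $L(-x)=-\Sigma\,L(x)\,\Sigma$ with $\Sigma=\mathrm{diag}(1,-1)\in\mathrm{End}(V_{(2,1)})$, since $\Sigma$ leaves the diagonal blocks $(xX,xX^{-1})$ invariant under conjugation while flipping the sign of the off-diagonal blocks $(Z,Z^{-1})$. Telescoping via $\Sigma^{2}=I$ yields $M(-x)=(-1)^{\L}\,\Sigma\,M(x)\,\Sigma$, and cyclicity of $\mathrm{tr}_{0}$ gives $t(-x)=(-1)^{\L}\,t(x)$.

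The self-adjointness claim rests on two further identities, both verified entry-wise from the explicit form of $L(x)$. The first is the adjoint relation
\[
L(x)^{\dagger}=U_{q}\,\overline{L(x)}\,U_{q},\qquad U_{q}=I\otimes U,\quad U=\pi_{(3,+)}(\tau),
\]
which uses $X^{\dagger}=X^{-1}$, $Z^{\dagger}=Z^{-1}$, $UXU=X^{-1}$ and $UZU=Z^{-1}$. The second is the $\tau$-invariance
\[
(P\otimes U)\,L(x)\,(P\otimes U)=L(x),\qquad P=\pi_{(2,1)}(\tau),
\]
expressing invariance of $L$ under the coproduct action of $\tau\in D_{3}\subset D(D_{3})$, with $(P\otimes U)^{2}=I$. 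Starting from $t(x)^{\dagger}=\mathrm{tr}_{0}(L_{01}(x)^{\dagger}\cdots L_{0\L}(x)^{\dagger})$ (dagger reverses the product), I would substitute the first identity at each site and push the single-site $U_{k}$'s past the factors $\overline{L_{0j}}$ with $j\neq k$, with which they commute, to gather them into the global $\hat{U}=U^{\otimes\L}$ on the outside; this yields $t(x)^{\dagger}=\hat{U}\,\overline{\widetilde{t}(x)}\,\hat{U}$, where $\widetilde{t}(x)=\mathrm{tr}_{0}(L_{01}(x)\cdots L_{0\L}(x))$ is the transfer matrix with the $L$-factors in reversed order.

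Next, the entry-wise identity $\overline{L(x)}$ equals the partial transpose of $L(x)$ in the auxiliary factor (which holds because $\overline{Z}=Z^{-1}$), combined with the fact that this partial transpose reverses products in the auxiliary space, gives $\overline{\widetilde{t}(x)}=t(x)$ after taking $\mathrm{tr}_{0}$ and using the invariance of the trace under the partial transpose. Hence $t(x)^{\dagger}=\hat{U}\,t(x)\,\hat{U}$. Finally, iterating the $(P\otimes U)$-symmetry gives $(P_{0}\otimes\hat{U})\,M(x)=M(x)\,(P_{0}\otimes\hat{U})$; rearranging to $\hat{U}\,M(x)=P_{0}\,M(x)\,P_{0}\,\hat{U}$ and applying $\mathrm{tr}_{0}$, cyclicity eliminates the $P_{0}$'s and delivers $[\hat{U},t(x)]=0$. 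Together with $\hat{U}^{2}=I$, this collapses $t(x)^{\dagger}=\hat{U}\,t(x)\,\hat{U}$ to $t(x)^{\dagger}=t(x)$.

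The principal obstacle is identifying the adjoint relation $L(x)^{\dagger}=U_{q}\overline{L(x)}U_{q}$: dagger, complex conjugation, auxiliary transpose and quantum transpose each rearrange different pieces of $L$, and some bookkeeping is needed to locate the combination matching the Hermitian adjoint. Once this and the $\tau$-invariance of $L$ are in hand, the rest is routine manipulation of local operators inside partial traces.
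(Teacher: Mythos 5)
Your proof is correct, and all the local identities you invoke do hold: $L(-x)=-\Sigma L(x)\Sigma$, $L(x)^{\dagger}=(I\otimes U)\overline{L(x)}(I\otimes U)$ for $x\in\mathbb{R}$, $\overline{L(x)}=L(x)^{t_{0}}$, and the $\tau$-invariance $(P\otimes U)L(x)(P\otimes U)=L(x)$ all check out entry-wise against the block form of $L$. The route is, however, considerably more roundabout than the paper's. For the parity statement the paper simply notes from the component expansion $t(x)=\sum L^{i_{1}}_{i_{2}}(x)\otimes\cdots\otimes L^{i_{\L}}_{i_{1}}(x)$ that the diagonal entries $L^{i}_{i}$ are linear in $x$ while the off-diagonal ones are constant; since a cyclic word over two auxiliary indices has an even number of index changes, the sign is $(-1)^{\L}$ --- your $\Sigma$-conjugation is exactly this fact in gauge-transformation clothing. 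For self-adjointness the paper uses the single observation $[L^{i}_{j}(x)]^{\dagger}=L^{p(i)}_{p(j)}(x)$ with $p$ the transposition of the two auxiliary labels, so that hermitian conjugation merely relabels the summation indices in the expansion and $t(x)^{\dagger}=t(x)$ follows in one line, with no need for the auxiliary commutation $[\hat{U},t(x)]=0$ that your decomposition forces you to establish separately. Your splitting of the dagger into complex conjugation, quantum-space conjugation by $\hat{U}$, and auxiliary partial transposition works, but one step deserves an explicit word: partial transposition in the auxiliary space is \emph{not} an anti-homomorphism in general, and the identity $L_{01}^{t_{0}}\cdots L_{0\L}^{t_{0}}=(L_{0\L}\cdots L_{01})^{t_{0}}$ you rely on holds here only because the quantum-space components of $L_{0j}$ and $L_{0k}$ act on distinct sites and therefore commute; this is true in the present setting but should be stated. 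In exchange for the extra length, your argument isolates reusable structural symmetries of $L$ (the $\Sigma$-grading and the $\tau$-covariance), whereas the paper's proof is a minimal componentwise computation.
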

\begin{proof}
Expressing the transfer matrix as 
\begin{eqnarray}
t(x)=\sum_{i_1,...,i_\L} L^{i_1}_{i_2}(x)\otimes L^{i_2}_{i_3}(x)\otimes
....\otimes L^{i_\L}_{i_1}(x)
\label{expand}
\end{eqnarray}
the first part of the proposition follows from the fact that $L^i_i(x)$ is linear in $x$ while $L^i_j(x)$ is independent of $x$ when $i\neq j$. That $t(x)$ is self-adjoint for $x\in\mathbb{R}$ follows from
\begin{eqnarray*}
\left[t^i_j(x)\right]^\dagger=t^{p(i)}_{p(j)}(x)
\end{eqnarray*}
where $p(1)=2,\,p(2)=1$, and as a result hermitian conjugation leaves the right hand side of (\ref{expand}) invariant. 
\end{proof}

\section{Functional relations and the Bethe ansatz solution}

The key observation needed in formulating a set of functional relations for the transfer matrices is that $L(\omega^{-1})$ is singular, as can be deduced from (\ref{linv}). It can be verified by direct calculations that the nullspace of $L(\omega^{-1})$ is three-dimensional. A specialisation of (\ref{llr}) gives 
\begin{eqnarray*}
L_{12}(\omega^{-1}) L_{13}(x) R_{23}(\omega x) = R_{23}(\omega x) L_{13}(x) L_{12}(\omega^{-1})
\end{eqnarray*}
which shows that the three-dimensional left-nullspace of  $L_{12}(\omega^{-1})$ is invariant under the right action of $R_{23}(\omega x) L_{13}(x) $.
For the tensor product of the first two vector spaces labelled 1 and 2, we adopt  a symmetry-adapted basis consisting of the three left-nullspace spanning vectors of $L(\omega^{-1})$ and three vectors not in the left-nullspace. We then find that we can write
\begin{eqnarray}
 R_{23}(\omega x) L_{13}(x)= \left(
\begin{array}{cc}
(\omega^{-1}x+1)\overline{R}(x) & \star \\
0 & (\omega x-1)\overline{\overline{R}}(\omega^{-1}x) 
\end{array}
\right)
\label{fusion}
\end{eqnarray}  
where $\star$ denotes an expression whose precise form is not needed. Through use of 
(\ref{fusion}) we deduce that the associated transfer matrices satisfy the functional relation
\begin{eqnarray}
T(\omega x)t(x) = f(x)\overline{T}(x)+g(x)\overline{\overline{T}}(\omega^{-1}x)  
\label{funct1}
\end{eqnarray}
where we have set $f(x)=(\omega^{-1}x+1)^\L$ and $g(x)=(\omega x-1)^\L$.
Making the restriction $x\in\mathbb{R}$ and taking the complex conjugate of (\ref{funct1}) leads to a second functional relation
\begin{eqnarray}
T(\omega^{-1} x) \left[t(x)\right]^*= f(\omega^{-1}x)\overline{T}(x)+g(\omega x)\overline{\overline{T}}(\omega x).  
\label{funct2}
\end{eqnarray}
Starting with (\ref{llr1},\ref{llr2}) and performing similar calculations leads to another four functional relations
\begin{eqnarray*}
\overline{T}(\omega x) t(\omega^{-1} x)&=& f(\omega^{-1}x)\overline{\overline{T}}(x)+g(\omega^{-1}x){{T}}(\omega^{-1}x), 
\label{funct3} \\
\overline{T}(\omega^{-1} x) \left[t(\omega^{-1}x)\right]^*&=& f(x)\overline{\overline{T}}(x)+g(\omega^{-1} x){{T}}(\omega x),   
\label{funct4} \\
\overline{\overline{T}}(\omega x) t(\omega x)&=& f(\omega x){T}(x)+g(\omega x)\overline{T}(\omega^{-1}x),  
\label{funct5} \\
\overline{\overline{T}}(\omega^{-1} x) \left[t(\omega x)\right]^*&=& f(\omega x){T}(x)+g( x)\overline{T}(\omega x).
\label{funct6}
\end{eqnarray*} 
Let $\lambda(x),\,\Lambda(x),\overline{\Lambda}(x),\overline{\overline{\Lambda}}(x)$ denote the eigenvalues of $t(x),\,T(x),\overline{T}(x),\overline{\overline{T}}(x)$ respectively, when acting on the global Hilbert space $W$ (\ref{tpm}). Keeping in mind that $t(x)$, $T(x)$, $\overline{T}(x)$, $\overline{\overline{T}}(x)$ are simultaneously diagonalisable, we obtain functional relations for the eigenvalues which can be conveniently expressed in matrix form 
\begin{eqnarray}
\left(
\begin{array}{ccc}
 \lambda(\omega^{-1} x)& -f(\omega^{-1} x)& -g(\omega^{-1} x)\\ 
 -g(  x)& \lambda(x) & -f( x)\\
 -f(\omega x)& -g(\omega x) & \lambda(\omega x)
 \end{array}
 \right) 
 \left(
 \begin{array}{c}
 {{\Lambda}}(x)\\ \overline{\Lambda}(\omega^{-1} x) \\ \overline{\overline{\Lambda}}(\omega x)  
  \end{array}
  \right) 
  = \left(
  \begin{array}{c}
  0\\ 0 \\ 0 
  \end{array}
  \right) 
  \label{matrix1}\\
\left(
\begin{array}{ccc}
 \left[\lambda(\omega^{-1} x)\right]^*& -f(x)& -g(\omega^{-1} x)\\ 
 -g(\omega  x)& \left[\lambda(x)\right]^* & -f( \omega^{-1} x)\\
 -f(\omega x)& -g( x) & \left[\lambda(\omega x)\right]^*
 \end{array}
 \right) 
 \left(
 \begin{array}{c}
 {\Lambda}( x)  \\ \overline{\Lambda}(\omega x) \\  \overline{\overline{\Lambda}}( \omega^{-1} x)  
  \end{array}
  \right) 
  = \left(
  \begin{array}{c}
  0\\ 0 \\ 0 
  \end{array}
  \right).
  \label{matrix2}
\end{eqnarray}
In view of Proposition \ref{prop} the {\it sets} of eigenvalues for $t(x)$ and $[t(x)]^*$ are equivalent. To avoid confusion we have used the notations $\lambda(x)$ and $[\lambda(x)]^*$ in (\ref{matrix1},\ref{matrix2}) to make clear the distinction between a {\it particular} eigenvalue of $t(x)$ compared to a {\it particular} eigenvalue of $[t(x)]^*$.

To find the Bethe ansatz solution we make use of the facts that the matrices appearing in (\ref{matrix1},\ref{matrix2}) must have zero determinant and that their nullspaces are  one-dimensional, which can be shown to be generically true.
Making the change of variable $x \mapsto \omega x$ in (\ref{matrix1}), and rearranging the rows, we obtain
\begin{eqnarray*}
\left(
\begin{array}{ccc}
 \lambda(\omega^{-1} x)& -f(\omega^{-1} x)& -g(\omega^{-1} x)\\ 
 -g(  x)& \lambda(x) & -f( x)\\
 -f(\omega x)& -g(\omega x) & \lambda(\omega x)
 \end{array}
 \right) 
 \left(
 \begin{array}{c}
 \overline{\overline{\Lambda}}(\omega^{-1} x)  \\ \Lambda(\omega x) \\ \overline{\Lambda}( x)  
  \end{array}
  \right) 
  = \left(
  \begin{array}{c}
  0\\ 0 \\ 0 
  \end{array}
  \right).
\end{eqnarray*}
Since the nullspace is  one-dimensional, we have for some rational function $M(x)$
\begin{eqnarray*}
\left(
 \begin{array}{c}
 \Lambda(x) \\ \overline{\Lambda}(\omega^{-1} x) \\ \overline{\overline{\Lambda}}(\omega x) 
  \end{array}
  \right) 
  = M(x) 
  \left(
 \begin{array}{c}
 \overline{\overline{\Lambda}}(\omega^{-1} x)  \\ \Lambda(\omega x) \\ \overline{\Lambda}( x)  
  \end{array}
  \right) 
  = M(x)M(\omega x) 
  \left(
 \begin{array}{c}
 {\overline{\Lambda}}(\omega x) \\ \overline{\overline{\Lambda}}( x)  \\ \Lambda(\omega^{-1} x) \\   
  \end{array}
  \right)
  \end{eqnarray*}
which allows us to express the nullspace spanning vector as
\begin{eqnarray*}
\left(
 \begin{array}{c}
 \Lambda(x) \\ M(x) {\Lambda}(\omega x) \\ M(x) M(\omega x){\Lambda}(\omega^{-1} x) 
  \end{array}
  \right) 
  \end{eqnarray*}
  with $M(x)M(\omega x) M(\omega^{-1} x)=1$. We may now write
\begin{eqnarray*}
\lambda(x)&=& f(x) \frac{M(\omega x) \Lambda (\omega^{-1} x)}{\Lambda(\omega x)} 
+g(x) \frac{\Lambda(x)}{M(x)\Lambda(\omega x) } \\
&=& f(x)z_1(x) +g(x)z_2(x)
\end{eqnarray*}
where
\begin{eqnarray*}
z_1(x)&=& \frac{M(\omega x) \Lambda (\omega^{-1} x)}{\Lambda(\omega x)}, \qquad
z_2(x)= \frac{\Lambda(x)}{M(x)\Lambda(\omega x) }. 
\end{eqnarray*}
It follows that
\begin{eqnarray} 
z_1(x)z_2(\omega x)&=& 1, \label{z1} \\
z_1(x)z_1(\omega x)z_1(\omega^{-1}x)&=&1, \label{z2} \\
z_2(x)z_2(\omega x)z_2(\omega^{-1}x)&=&1. \label{z3}
\end{eqnarray}
It can verified that (\ref{z1},\ref{z2},\ref{z3}) are sufficient conditions for the matrix in (\ref{matrix1}) to have zero determinant. Note that $z_1(x)$ and $z_2(x)$ are rational functions where the polynomials in the numerator and denominator have the same order. The most general form for these functions satisfying (\ref{z1},\ref{z2},\ref{z3}) is
\begin{eqnarray*}
z_1(x) &=& \prod_{j=1}^N \frac{\omega x-iy_j}{x-iy_j},\qquad
z_2(x) = \prod_{j=1}^N \frac{\omega^{-1} x-iy_j}{x-iy_j}
\end{eqnarray*}
for some parameters $S=\{y_j\}_{j=1}^N$, from which an explicit expression for $\lambda(x)$ is found to be
\begin{eqnarray}
\lambda(x)=(\omega^{-1}x+1)^\L \prod_{j=1}^N \frac{\omega x-iy_j}{x-iy_j}
+(\omega x-1)^\L \prod_{j=1}^N \frac{\omega^{-1} x-iy_j}{x-iy_j} .   
\label{lambda}
\end{eqnarray}
Since $\lambda(x)$ as given by (\ref{lambda}) must be a polynomial, setting the residues of $\lambda(x)$ equal to zero leads to the Bethe ansatz equations 
\begin{eqnarray}
\left(\frac{i\omega  y_j -1}{i\omega^{-1} y_j+1}\right)^\L
=\omega\prod^N_{k\neq j}\frac{\omega y_j-y_k}{\omega^{-1}y_j-y_k}
\label{bae}
\end{eqnarray} 
when $y_j\neq 0$. For $y_j=0$ there is no Bethe ansatz equation since the vanishing residue condition is automatically satisfied. 

A feature of the Bethe ansatz equations (\ref{bae}) is that they admit spurious solutions which do not give valid eigenvalue expressions for $\lambda(x)$ when substituted into (\ref{lambda}). (Spurious solutions are also found for Fateev-Zamolodchikov spin models \cite{a92}). Consider the case $N=1$ for which the Bethe ansatz equations (\ref{bae}) reduce to
$$\left(\frac{i\omega y_j -1}{i\omega^{-1}y_j+1}\right)^\L 
=\omega.
$$ 
The above admits solutions 
\begin{eqnarray}
y=-i\frac{ \omega^{1/2\L}\exp(j\pi i/\L) +\omega^{-1/2\L}\exp(-j\pi i/\L) }{\omega^{1-1/2\L}\exp(-j\pi i/\L) - \omega^{1/2\L-1}\exp(j\pi i/\L) },  \qquad j=1,..,\L. 
\label{solns}
\end{eqnarray}
In all cases except $\L=1,\,2$, it can be verified that substituting (\ref{solns}) into (\ref{lambda}) generally leads to an expression which is not real, in contradiction of Proposition \ref{prop}. 

Invoking Proposition \ref{prop} further  we calculate from (\ref{lambda})
\begin{eqnarray}
\lambda(x)&=& (-1)^\L \lambda(-x) \nonumber \\
&=& (\omega^{-1} x-1)^\L \prod_{j=1}^N \frac{\omega x+iy_j}{x+iy_j}
+(\omega x+1)^\L \prod_{j=1}^N \frac{\omega^{-1} x+iy_j}{x+iy_j} 
\label{lambdaminusx}
\end{eqnarray}  
and 
\begin{eqnarray}
\lambda(x)&=& \left[\lambda(x)\right]^* \nonumber \\
&=& (\omega x+1)^\L \prod_{j=1}^N \frac{\omega^{-1} x+iy^*_j}{x+iy^*_j}
+(\omega^{-1} x-1)^\L \prod_{j=1}^N \frac{\omega x+iy^*_j}{x+iy^*_j}, \qquad x\in\mathbb{R }.
\label{lambdastar}\end{eqnarray} 
It is seen that (\ref{lambdaminusx}) and (\ref{lambdastar}) will be equivalent if the set of Bethe ansatz roots satisfying (\ref{bae}) is invariant under complex conjugation.  This strongly suggests  that the roots of (\ref{bae}) which give valid expressions for the eigenvalues $\lambda(x)$ through (\ref{lambda}) will be real or arise in complex conjugate pairs. From numerical solutions of (\ref{bae}) we find that this is the case, with specific examples given in Table 1. For $\L=1,2,3$, a complete set of solutions for $\lambda(x)$ is obtained from the Bethe ansatz equations, in agreement with the results of computational  diagonalisation.  Note that in Table 1 and hereafter we use a two-index notation, $y_{aj}$, for the roots of (\ref{bae}). The first index $a$ labels the set to which the root belongs. The second index $j$ enumerates the elements of each set, which allows us to express the sets of roots as  
\begin{eqnarray*}
 S_a=\{y_{aj}\}_{j=1}^{N_a}.
\end{eqnarray*}

\begin{table}
\begin{center}
\begin{tabular}{|c|c|c|}
\hline
&$\lambda(x)$& Bethe roots $S_a$ \\
\hline
\hline 
 $\L=1$& $-x$&$S_1=\{\,\}$  \\

& $2x$& $S_2=\{y_{21}=-1/\sqrt{3}\}$ \\
\hline
\hline
$\L=2$   &$-x^2-1$    &$S_1=\{y_{11}=0\}$ \\
   &$-x^2+2$   &$S_2=\{y_{21}=-\sqrt{3}/{2}\}$ \\
   &$2x^2-1$   & $S_3=\{y_{31}=0,\,y_{32}=-2/\sqrt{3}\}$\\
   &$2x^2+2$   &$S_4=\{y_{41}=y_{42}^*=-\sqrt{1/3}+i\sqrt{2/3}\}$ \\
\hline
\hline
$\L=3$ &$-x^3-2\sqrt{3}\sin(2\pi/9)  x$ &$S_1=\{y_{11}=1.3500,\,y_{12}=-0.8423\}$ \\
 &$-x^3-2\sqrt{3}\sin(8\pi/9)  x$ &$S_2=\{y_{21}=y_{22}^*=0.3889+0.7687i\}$ \\
 & $-x^3+2\sqrt{3}\sin(4\pi/9)  x$&$S_3=\{y_{31}=-0.3576,\,y_{32}=-1.1043\}$ \\
  & $2x^3$&$S_4=\{y_{41}=y_{42}^*=0.2136+0.6230i,\,y_{43}=-1.3048\}$ \\
   &$2x^3+3x$ &$S_5=\{y_{51}=y_{52}^*=0.6704+1.0129i,\,y_{53}=-3.9134\}$ \\
     &$2x^3-3x$ &$S_6=\{y_{61}=0.7779,\,y_{62}=0.5077,\,y_{63}=-1.4619\}$ \\
   \hline
\end{tabular} 
\end{center}
\caption{Expressions for the transfer matrix eigenvalues $\lambda(x)$ and  associated solutions of the Bethe ansatz equations for $\L=1,\,2,\,3$. For all cases  the results are exact except the Bethe roots with $\L=3$ for which   numerical approximations are given. Agreement is found between the Bethe ansatz results and computational  diagonalisation. } 
\end{table}

Next we define the quantities 
\begin{eqnarray}
\Lambda_{ab}(x) &=& c_{ab}\xi_a(\omega^{-1}x)\left[\xi_b(\omega^{-1}x)\right]^* 
\label{L1}\\
\overline{\Lambda}_{ab}(x) &=& c_{ab}\xi_a(\omega x)\left[\xi_b(\omega x)\right]^* 
\label{L2}\\
\overline{\overline{\Lambda}}_{ab}(x) &=& c_{ab}\xi_a( x)\left[\xi_b(x)\right]^* 
\label{L3}  
\end{eqnarray}
where $c_{ab}$ is some constant and   
\begin{eqnarray*}
\xi_a(x)=\prod_{j=1}^{N_a}(x-iy_{aj}).
\end{eqnarray*}
It can be verified by direct substitution and use of (\ref{lambda},\ref{lambdastar}) that the quantities 
(\ref{L1},\ref{L2},\ref{L3}) satisfy equations (\ref{matrix1},\ref{matrix2}). Moreover we have checked for the cases $\L=1,2,3$ that (\ref{L1},\ref{L2},\ref{L3}) reproduce the transfer matrix eigenvalues which are obtained by computational  diagonalisation, with appropriate choices for the constants $c_{ab}$.  For the case $\L=3$ explicit values for these constants are given in Table 2, as well as the multiplicities of the eigenvalues of $\Lambda_{ab}(x)$, accounting for all 27 eigenvalues. We remark that the degeneracies of $\Lambda_{ab}(x),\,\overline{\Lambda}_{ab}(x),\,\overline{\overline{\Lambda}}_{ab}(x)$ are the same for a given $c_{ab}$. The occurrence of these degeneracies can be understood in terms of the symmetry of the problem, which is given by $D_3$ since the imposition of periodic boundary conditions breaks the global $D(D_3)$ symmetry. (Recall that for this example the physical space of states is trivial since $\L$ is odd.) The three-fold tensor product of a 3-dimensional irreducible representation of $D(D_3)$ decomposes into nine copies of 3-dimensional representations. Each of these representations further decomposes into a sum of 1- and 2-dimensional representations with respect to the $D_3$ subalgebra of $D(D_3)$. This leads in total to nine 1-dimensional and nine 2-dimensional irreducible representations of $D_3$, which are in one-to-one correspondence with the results of Table 2.        

\begin{table}
\begin{center}
\begin{tabular}{|c|c|c|c|}
\hline  
$a$ & $b$ & $c_{ab}$ & multiplicity of $\Lambda_{ab}(x)$ \\
   \hline   
\hline
1 & 1 & $-3+2\sqrt{3}\sin(2\pi/9)$ &  2  \\
1 & 2 & $-2\sqrt{3}\sin(8\pi/9)$    & 2   \\
1 & 3 & $-2\sqrt{3}\sin(2\pi/9)$ &  2  \\
2 & 1 & $-2\sqrt{3}\sin(8\pi/9)$    & 2   \\
2 & 2 & $-3+2\sqrt{3}\sin(8\pi/9)$ &  2  \\
2 & 3 & $2\sqrt{3}\sin(4\pi/9)$    & 2 \\
3 & 1 & $-2\sqrt{3}\sin(2\pi/9)$ &  2  \\
3 & 2 & $2\sqrt{3}\sin(4\pi/9)$    & 2   \\
3 & 3 & $3+2\sqrt{3}\sin(4\pi/9)$ &  2  \\
4 & 4 & 3 &  1  \\
4 & 5 & 3 &   1 \\
4 & 6 & -3 &   1 \\
5 & 4 & 3 &   1 \\
5 & 5 & 3 &  1  \\
5 & 6 & -3 & 1   \\
6 & 4 & -3 &  1 \\
6 & 5 & -3 & 1   \\
6 & 6 & 3 &  1  \\     
\hline 
\end{tabular} 
\end{center}
\caption{Choices of the constants $c_{ab}$ in (\ref{L1}) for $\L=3$ which reproduce the eigenvalues $\Lambda_{ab}(x)$ (not shown) found through computational  diagonalisation. The sets $S_a$ which are  used to evaluate $\Lambda_{ab}(x)$ are those in Table 1. The multiplicities obtained by computational  diagonalisation are given in the final column, the sum of which confirms that the full spectrum is reproduced through (\ref{L1}). }
\end{table} 

We have also confirmed that (\ref{L1},\ref{L2},\ref{L3}) hold true for $\L=1,\,2$, and expect that these formulae will give the spectrum of $\Lambda(x)$ for all $\L$ with real values for 
$c_{ab}$. In general,  assuming that the constants $c_{ab}$ are real ensures that the eigenvalues of 
$\Lambda_{ab}(x)$, $\overline{\Lambda}_{ab}(x)$, $\overline{\overline{\Lambda}}_{ab}(x)$ will either be real or occur as complex conjugate pairs, as required. An unexpected feature of these results is that in many cases the transfer matrix eigenvalues are dependent on 
two set of roots of the Bethe ansatz equations (\ref{bae}). However, not all combinations of $a$ and $b$ give expressions in the spectrum of the transfer matrices.

As a final comment, we indicate that the constants $c_{ab}$ are not determined through the Bethe ansatz equations (\ref{bae}). Fortunately, it is not necessary to know the $c_{ab}$ in order to compute the energy spectrum of (\ref{Ham}). From the definition (\ref{logderiv}) we obtain that the energy eigenvalues are given by  
\begin{eqnarray}
E_{ab}&=&\left.i\Lambda_{ab}^{-1}(x)\frac{d}{dx}\left[x^{-\L}\Lambda_{ab}(x)\right]\right|_{x=1} \nonumber \\
&=&i\left(-\L+\sum_{j=1}^{N_a}\frac{\omega^{-1}}{\omega^{-1}-iy_{aj}}
+\sum_{k=1}^{N_b}\frac{\omega}{\omega+iy_{bk}^*}  \right)
\label{nrg}
\end{eqnarray} 
which are independent of $c_{ab}$.

\section{Conclusion} 

We have calculated the expression (\ref{nrg}), which is a function of two sets of roots of the Bethe ansatz equations (\ref{bae}), as the energy eigenvalues of the Hamiltonian (\ref{Ham}) acting on the Hilbert space $W$.
We have illustrated that the equations (\ref{bae}) admit spurious solutions, and that not all combinations of roots in (\ref{nrg}) are needed to generate the spectrum. A problem to address in the future is to determine the structure of roots which lead to legitimate expressions for the spectrum of $H$ on $W$, and furthermore to classify the solutions which give the spectrum of $H$ restricted to $\mathcal{W}$.  It would also be useful to characterise the appropriate root structures which correspond to the ground-state and elementary excitations in this latter case.   
  
Another direction for future work is to extend the techniques developed here to chains with different boundary conditions. Integrable boundary conditions for this model on an open chain have been determined in \cite{dfil09}, and it is also possible to construct integrable closed chains with non-local (i.e. braided) boundary interactions between the 1st and $\L$th sites 
\cite{lf}. 

\section*{Acknowledgements} K.A.D. acknowledges
the support of the Australian Research Council under Discovery Project DP1092513. P.S.I. is supported by an Early
Career Researcher Grant from The University of Queensland.

\end{document}